\documentclass[11pt]{article}

\usepackage[a4paper,margin=1in]{geometry}
\usepackage{setspace}
\usepackage{amsmath,amssymb,amsthm}
\usepackage{mathtools}
\usepackage{bm}

\usepackage[T1]{fontenc}
\usepackage[utf8]{inputenc}

\usepackage{hyperref}
\hypersetup{
  colorlinks=true,
  linkcolor=blue,
  citecolor=blue,
  urlcolor=blue
}

\usepackage[round,authoryear]{natbib}
\newtheorem{theorem}{Theorem}
\newtheorem{assumption}{Assumption}
\newtheorem{definition}{Definition}

\newtheorem{lemma}{Lemma}
\newtheorem{corollary}{Corollary}

\DeclareMathOperator{\logit}{logit}

\usepackage{authblk} 

\title{The General Formulation of Loss-Based Priors for Parameter Spaces}
\author[1,2,3]{Cristiano Villa}
\affil[1]{Duke Kunshan University, China}
\affil[2]{Duke University, US}
\affil[3]{Newcastle University, UK}
\date{} 

\begin{document}

\maketitle

\begin{abstract}
Loss-based priors assign probability mass to parameter values according to the
inferential loss incurred when they are excluded from the parameter space, and
provide a general solution for discrete parameters. Extending this idea to
continuous settings is challenging, as the exclusion of a single point induces
no loss. We propose a neighbourhood-exclusion framework in which inferential
loss is defined by removing a local region around each parameter value. Under
standard regularity conditions, this yields a class of prior distributions
driven by the local geometry of the Kullback--Leibler divergence. In one
dimension, the resulting prior coincides with Jeffreys' prior, while in higher
dimensions it leads to a family of priors indexed by the geometry of the
exclusion region. The proposed formulation provides a unified extension of
loss-based priors and offers a geometric interpretation of objective prior
construction beyond isotropic settings.
\end{abstract}

\section{Introduction}\label{sc_introduction}

Prior specification is a central component of Bayesian inference, particularly in
settings where little or no prior information is available. Objective Bayes seeks
to construct priors according to formal principles, often derived from the
sampling model, such as invariance under reparametrisation, \citep{Jeffreys1961} maximisation of
missing information \citep{BergerBernardoSun2009}, or penalisation of model complexity
\citep{Simpsonetal2017}. For a discussion in support of Objective Bayes, refer to \cite{Berger2006}, while overviews of objective methods can be found in \cite{KassVasserman1996} or in \cite{ConsonniFouskakisLiseoNtzoufras2018}, for example. Finally, for a novel class of objective priors that depend solely on the parameter space, refer to \cite{Leisenetal2020}, \cite{VillaWalker2017} and to \cite{Antonianoetal2024}.

Despite substantial progress, most objective prior constructions are inherently
tailored to continuous parameter spaces. In particular, methods based on Fisher
information or asymptotic information measures do not extend naturally to discrete
parameters \citep{Rissanen1983,LindsayRoeder1987,BargerBunge2008,BBS2012}. Conversely, approaches developed for discrete settings are typically
not formulated in a way that admits a direct extension to continuous spaces.

A general framework for discrete parameter spaces was proposed in
\citet{VillaWalker2015}, where prior mass is assigned according to the inferential
loss incurred by removing a parameter value from the space. This construction,
based on Kullback--Leibler divergence, yields priors that depend only on the model
and the parameter space, and has been successfully applied in a range of discrete
problems \citep{VillaWalker2015scan,VillaLee2020,Hinoveanuetal2019,Grazianetal2020,Serafinietal2024}.

In this paper we extend the loss-based framework to continuous parameter spaces.
The key idea is to replace point exclusion with the exclusion of a neighbourhood
around each parameter value. This modification restores a non-trivial notion of
inferential loss and leads to a class of prior distributions indexed by the
geometry of the exclusion region.

Under standard regularity conditions, the resulting priors are determined by the
local behaviour of the Kullback--Leibler divergence. In one dimension, the
construction recovers Jeffreys' prior. In higher dimensions, it yields a class of
priors indexed by a positive definite matrix $A(\theta)$, which governs the
geometry of neighbourhood exclusion. Different choices of $A(\theta)$ lead to
distinct but interpretable priors, including both isotropic (volume-based) and
anisotropic (direction-sensitive) constructions.

The remainder of the paper is organised as follows. Section
\ref{sc_background} reviews the loss-based framework for discrete parameter
spaces. Section \ref{sc_neighbourhoodexclusion} introduces neighbourhood
exclusion for continuous parameters. Section \ref{sc_lossbasedprior} derives the
resulting class of priors. Section 5 discusses their main properties, and Section
6 concludes.

\section{Background: The loss-based framework}\label{sc_background}

We briefly review the loss-based prior construction for discrete parameter
spaces introduced in \citet{VillaWalker2015}, which forms the basis for the
extension developed in this paper.

Let $\theta\in\Theta$ be a discrete parameter and let $f(\cdot\mid\theta)$
denote the statistical model. Suppose that $\theta_0$ is the true parameter
value. If $\theta_0$ is removed from the parameter space, inference must proceed
using an alternative value $\theta_j\neq\theta_0$. The associated inferential
loss can be quantified through the Kullback--Leibler divergence between the true
model and the closest remaining alternative \citep{Berk1966}. Thus, the \emph{worth} of $\theta_0$ is defined as
\begin{equation}\label{eq:worthdiscrete_trim}
u(\theta_0)
=
\min_{\theta_j\neq\theta_0}
D_{KL}\!\left(f(\cdot\mid\theta_0)\,\|\,f(\cdot\mid\theta_j)\right).
\end{equation}

This quantity measures the minimal loss incurred when the true parameter value
is excluded from the model. The loss-based prior assigns higher weight to values
whose exclusion leads to larger inferential loss.

Following \citet{VillaWalker2015}, the prior is obtained by matching the
inferential loss with the self-information loss $-\log \pi(\theta)$. This leads,
up to proportionality, to
\begin{equation}\label{eq:losspriordiscrete_trim}
\pi_L(\theta)
\propto
\exp\{u(\theta)\}-1,
\qquad \theta\in\Theta.
\end{equation}

Extending this construction to continuous parameter spaces is not immediate.
If $\theta\in\Theta\subset\mathbb{R}^d$ is continuous, then for any
$\varepsilon>0$ there exists $\theta'$ arbitrarily close to $\theta$.
Under standard regularity conditions,
\[
D_{KL}\!\left(f(\cdot\mid\theta)\,\|\,f(\cdot\mid\theta')\right)\to 0
\quad \text{as } \theta'\to\theta,
\]
and therefore
\begin{equation}\label{eq:zeroKL_trim}
\inf_{\theta'\neq\theta}
D_{KL}\!\left(f(\cdot\mid\theta)\,\|\,f(\cdot\mid\theta')\right)
= 0.
\end{equation}

Therefore, in continuous parameter spaces, the exclusion of a single parameter value incurs no inferential loss. This renders the discrete approach inapplicable directly and shows that pointwise exclusion is not meaningful in this setting.

One could circumvent the issue by discretising the parameter space, which has been done, for example, in \cite{VillaWalker2014}, \cite{VillaRubio2018} or in \cite{Leisenetal2018}, but it requires a conceptual justification. Another approach would be to apply the discrete method to an increasingly fine grid, as in \cite{BrownWalker2012}; while this approach works, it obscures the role played by the geometry of the parameter space.

In the next section we address this issue by replacing point exclusion with the
exclusion of neighbourhoods, leading to a non-trivial extension of the
loss-based framework to continuous parameter spaces.

\section{Neighbourhood exclusion: The general framework}
\label{sc_neighbourhoodexclusion}

To extend the loss-based construction to continuous parameter spaces, we replace
point exclusion with the exclusion of a neighbourhood around each parameter
value.

Throughout, we assume the following regularity conditions.

\begin{assumption}[Regularity]
\label{as:regularityconditions}
Let $f(\cdot\mid\theta)$ be a statistical model with
$\theta\in\Theta\subset\mathbb{R}^d$, $d\geq1$. The model is identifiable,
$\Theta$ is open, the log-likelihood is twice continuously differentiable, and
the Fisher information matrix $I(\theta)$ exists, is finite, positive definite,
and continuous in $\theta$.
\end{assumption}

Under Assumption \ref{as:regularityconditions}, the Kullback--Leibler divergence
admits a local quadratic approximation, which ensures that small perturbations
of $\theta$ correspond to small inferential losses.

\begin{definition}[Exclusion sets]
\label{de:exclusionsets_trim}
Let $\theta\in\Theta$ and let $\delta>0$. An \emph{exclusion set} is a
neighbourhood $R_\delta(\theta)\subset\Theta$ representing the set of parameter
values removed together with $\theta$.
\end{definition}

The parameter $\delta$ controls the resolution at which parameter values are
distinguishable, while the geometry of $R_\delta(\theta)$ determines how
inferential loss is distributed across directions in the parameter space.

\begin{definition}[$\delta$-worth]
\label{de:deltaworth_trim}
Given an exclusion set $R_\delta(\theta)$, the \emph{$\delta$-worth} of $\theta$
is defined as
\begin{equation}\label{eq:deltaworth_trim}
u_\delta(\theta)
=
\inf_{\theta'\notin R_\delta(\theta)}
D_{KL}\!\left(f(\cdot\mid\theta)\,\|\,f(\cdot\mid\theta')\right).
\end{equation}
\end{definition}

The quantity $u_\delta(\theta)$ represents the minimum inferential loss incurred
when the true parameter value is approximated by the best available alternative
outside the exclusion region.

Following the loss-based principle introduced in Section
\ref{sc_background}, we match inferential loss with self-information loss, leading to the loss-based prior
\begin{equation}\label{eq:lossbasedprior_trim}
\pi_\delta(\theta)
\propto
\exp\{u_\delta(\theta)\}-1.
\end{equation}

In contrast to the discrete case, $\pi_\delta(\theta)$ is interpreted locally:
$\pi_\delta(\theta)\,d\theta$ represents the probability content over an infinitesimal
region, and only relative differences in self-information are meaningful.

The parameter $\delta$ determines the scale of neighbourhood exclusion. In what
follows we focus on the local regime $\delta\to 0$, which captures the intrinsic
geometry of inferential loss and leads to tractable expressions for the induced
prior distributions.

\section{Loss-based prior distributions}
\label{sc_lossbasedprior}

Under Assumption \ref{as:regularityconditions},  we have (see, for example, \cite{amari2007})
\begin{equation}\label{eq:quadraticexpansion_trim}
D_{KL}\!\left(f(\cdot\mid\theta)\,\|\,f(\cdot\mid\theta+h)\right)
=
\tfrac{1}{2}\,h^\top I(\theta)\,h + o(\|h\|^2),
\qquad h\to 0,
\end{equation}
where $I(\theta)$ is the Fisher information matrix.

This expansion shows that, locally, inferential loss is governed by the
quadratic form $h^\top I(\theta)h$. Directions associated with small
eigenvalues of $I(\theta)$ correspond to weak curvature, while large
eigenvalues correspond to rapid changes in the model.

\subsection{One-dimensional parameter spaces}

Let $\theta\in\Theta\subset\mathbb{R}$. The exclusion region is
\[
R_\delta(\theta)=\{\theta+h:\ |h|\le \delta\}.
\]

\begin{lemma}\label{lm:onedim_trim}
Under Assumption \ref{as:regularityconditions},
\[
u_\delta(\theta)
=
\tfrac{1}{2}I(\theta)\delta^2 + o(\delta^2),
\qquad \delta\to 0.
\]
\end{lemma}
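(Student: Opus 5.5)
We prove matching upper and lower bounds on $u_\delta(\theta)$, both of the form $\tfrac12 I(\theta)\delta^2 + o(\delta^2)$, using the quadratic expansion \eqref{eq:quadraticexpansion_trim}. Write $g(h)=D_{KL}(f(\cdot\mid\theta)\,\|\,f(\cdot\mid\theta+h))$. The infimum in \eqref{eq:deltaworth_trim} runs over $\theta'=\theta+h$ with $|h|>\delta$ and $\theta+h\in\Theta$. For the upper bound, take $h=\delta+\eta$ with $\eta>0$ arbitrarily small. Openness of $\Theta$ guarantees $\theta+h\in\Theta$ for $\delta$ small. The expansion gives
\[
u_\delta(\theta)\le g(\delta+\eta)=\tfrac12 I(\theta)(\delta+\eta)^2+o((\delta+\eta)^2).
\]
Letting $\eta\downarrow 0$ and using continuity of $g$, which follows from twice continuous differentiability of the log-likelihood, yields $u_\delta(\theta)\le \tfrac12 I(\theta)\delta^2+o(\delta^2)$.

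For the lower bound we must control $g(h)$ for every admissible $h$, not only near $|h|=\delta$. Split the admissible set into a near region $\delta<|h|\le r$ and a far region $|h|>r$, where $r>0$ is a fixed small radius.

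\emph{Near region.} By \eqref{eq:quadraticexpansion_trim}, for any $\epsilon>0$ we can choose $r$ so that
\[
g(h)\ge \tfrac12\bigl(I(\theta)-\epsilon\bigr)h^2 \quad\text{for } |h|\le r.
\]
Since $I(\theta)>0$, choosing $\epsilon<I(\theta)$ makes the right-hand side increasing in $|h|$, so on the near region $g(h)\ge\tfrac12(I(\theta)-\epsilon)\delta^2$.

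\emph{Far region.} Here we need $c_r:=\inf_{|h|>r}g(h)>0$. Once this holds, for $\delta$ small enough that $\tfrac12 I(\theta)\delta^2<c_r$, the far region does not affect the infimum to order $\delta^2$. Combining the two regions gives $u_\delta(\theta)\ge\tfrac12(I(\theta)-\epsilon)\delta^2$ for all small $\delta$. Since $\epsilon$ is arbitrary, $u_\delta(\theta)\ge\tfrac12 I(\theta)\delta^2-o(\delta^2)$.

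The main obstacle is the separation condition $c_r>0$. Identifiability alone gives $g(h)>0$ pointwise, but not uniformly over the unbounded or non-compact set $\{|h|>r\}\cap(\Theta-\theta)$. A sequence $\theta'$ could drift toward the boundary of $\Theta$ or to infinity while the KL divergence tends to zero. Our first approach is to interpret the identifiability in Assumption~\ref{as:regularityconditions} in its standard strong form, namely that the model is locally and globally separated: $\inf_{\|\theta'-\theta\|>r}D_{KL}(f(\cdot\mid\theta)\,\|\,f(\cdot\mid\theta'))>0$ for every $r>0$. This is the same condition used in Berk-type consistency arguments \citep{Berk1966}, and we would state explicitly that it is what the lemma relies on.

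Failing that, since the lemma concerns the local regime, we could instead restrict the infimum in \eqref{eq:deltaworth_trim} to a fixed compact neighbourhood of $\theta$ inside $\Theta$. On such a set, continuity of $g$ together with pointwise positivity from identifiability gives $c_r>0$ by compactness. Everything else in the argument is routine bookkeeping of the $o(\cdot)$ terms.
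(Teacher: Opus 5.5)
Your proof uses the same idea as the paper's: the quadratic expansion \eqref{eq:quadraticexpansion_trim}, plus the fact that the infimum over the complement of $R_\delta(\theta)$ is reached at $|h|=\delta$. The paper's proof is a single sentence asserting that fact, and Step~1 of the appendix proof of Theorem~\ref{te:generalprior_trim} asserts it again without argument. Your $\epsilon$-sandwich on the near region shows what is actually true locally: the infimum sits at $|h|=\delta$ only up to $o(\delta^2)$.

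Your far-region step isolates exactly what the paper leaves unjustified, and your concern there is well founded. Under Assumption~\ref{as:regularityconditions} as stated, the lemma is false. Take $\Theta=(-\pi,\pi)$ and $X\sim\mathcal{N}_2(m(\theta),\mathrm{Id}_2)$ with $m(\theta)=(\sin 2\theta,\sin\theta)^\top$, the figure-eight immersion.
\begin{itemize}
\item The map $m$ is injective, so the model is identifiable.
\item The log-likelihood is smooth, and $I(\theta)=4\cos^2 2\theta+\cos^2\theta$ is continuous and strictly positive. Every part of the assumption therefore holds.
\item Yet $D_{KL}(f(\cdot\mid 0)\,\|\,f(\cdot\mid\theta'))=\tfrac12\|m(\theta')\|^2\to 0$ as $\theta'\to\pm\pi$. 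Hence $u_\delta(0)=0$ for every $\delta>0$, whereas $\tfrac12 I(0)\delta^2=\tfrac52\delta^2$.
\end{itemize}
So the separation condition $\inf_{|\theta'-\theta|>r}D_{KL}(f(\cdot\mid\theta)\,\|\,f(\cdot\mid\theta'))>0$ is a genuinely missing hypothesis of the lemma, not an artefact of your method. With it, your argument is a complete proof.

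Two minor repairs.
\begin{itemize}
\item For the upper bound you do not need continuity of $g$ at $h=\delta$. That continuity does not follow from differentiability of the log-likelihood without a domination condition anyway. The expansion gives $g(h)\le\tfrac12(I(\theta)+\epsilon)h^2$ for $|h|\le r$. Taking the infimum over $\delta<|h|\le r$ yields $u_\delta(\theta)\le\tfrac12(I(\theta)+\epsilon)\delta^2$ directly.
\item In the compact fallback, what the compactness step needs is lower semicontinuity of $g$, not continuity. Lower semicontinuity does hold: the densities are pointwise continuous in $\theta$, so Scheff\'e's lemma applies, and the Kullback--Leibler divergence is lower semicontinuous under weak convergence. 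Bear in mind that this fallback changes the definition of $u_\delta$, so it proves a localised version of the lemma rather than the lemma as written.
\end{itemize}
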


\begin{proof}
The result follows from the quadratic expansion in \eqref{eq:quadraticexpansion_trim}, since the infimum over $|h| \leq \delta$ is attained at $|h| = \delta$.
\end{proof}

Using $\exp(u)-1 \approx u$ for small $u$:

\begin{theorem}\label{teo:onedim_trim}
The loss-based prior satisfies
\[
\pi(\theta)\propto I(\theta)^{1/2}.
\]
\end{theorem}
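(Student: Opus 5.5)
The plan is to pass from the $\delta$-worth of Lemma \ref{lm:onedim_trim} to the prior \eqref{eq:lossbasedprior_trim}, and then to a $\delta$-free limit, in three steps.

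\emph{Step 1: linearise the exponential.} By Lemma \ref{lm:onedim_trim}, $u_\delta(\theta)=\tfrac12 I(\theta)\delta^2+o(\delta^2)$ for each fixed $\theta$, so $u_\delta(\theta)\to0$. Since $e^u-1=u+O(u^2)$ as $u\to0$, we get
\[
\exp\{u_\delta(\theta)\}-1=\tfrac12 I(\theta)\delta^2+o(\delta^2),\qquad \delta\to0.
\]

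\emph{Step 2: remove the constant.} The factor $\tfrac12\delta^2$ does not depend on $\theta$, so it is absorbed into the proportionality constant. Dividing by it gives
\[
\frac{2}{\delta^{2}}\bigl(\exp\{u_\delta(\theta)\}-1\bigr)\to I(\theta),
\]
which identifies the limiting unnormalised density up to this constant.

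\emph{Step 3: account for parametrisation consistency.} This is the main obstacle, since Steps 1--2 literally produce $I(\theta)$, not $I(\theta)^{1/2}$. The fix is to note that the exclusion radius $\delta$ is a statement about resolution in the model, not in the coordinate $\theta$, so it should be fixed in KL units. Concretely, choose $R_\delta(\theta)$ so that it has constant inferential size $u_\delta(\theta)\equiv\varepsilon$. By Lemma \ref{lm:onedim_trim}, its half-width is then $h_\varepsilon(\theta)=\sqrt{2\varepsilon/I(\theta)}\,(1+o(1))$.

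The prior is then read as mass per unit parameter length. Under the local interpretation of $\pi_\delta(\theta)\,d\theta$ stated after \eqref{eq:lossbasedprior_trim}, each excluded neighbourhood carries the same loss-based mass $e^{\varepsilon}-1$. Dividing by its length $2h_\varepsilon(\theta)$ gives
\[
\pi(\theta)\propto \frac{e^\varepsilon-1}{2h_\varepsilon(\theta)}\propto I(\theta)^{1/2}.
\]
The $\varepsilon$-dependent factors are constant in $\theta$ and drop out as $\varepsilon\to0$.

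The first thing I would try is to make this rigorous as a Riemann-sum argument. Partition $\Theta$ into cells of equal KL radius, assign each cell mass $\propto e^{\varepsilon}-1$ as in the discrete construction of Section \ref{sc_background}, and show that the resulting discrete measures converge to the density $\propto I(\theta)^{1/2}$. Continuity of $I$ (Assumption \ref{as:regularityconditions}) makes the $o(1)$ terms locally uniform on compact subsets of $\Theta$, which justifies the passage to the limit.
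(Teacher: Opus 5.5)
Your Steps 1--2 are exactly the paper's proof, which consists of Lemma~\ref{lm:onedim_trim}, the approximation $e^{u}-1\approx u$, and discarding the $\theta$-free factor $\tfrac12\delta^2$. You are right that this route yields $I(\theta)$, not $I(\theta)^{1/2}$. The paper asserts the square root without bridging that discrepancy. Its own Theorem~\ref{te:generalprior_trim} with $A=1$, and the Euclidean example D5 in Appendix~B, both return $\pi\propto I$. So with $R_\delta(\theta)=\{\theta+h:|h|\le\delta\}$ and $\pi_\delta\propto e^{u_\delta}-1$ as defined, the limiting prior is $\propto I(\theta)$. This agrees with the statement only when $I$ is constant: for $\mathcal{N}(0,\sigma^2)$ parametrised by $\sigma$ it gives $\sigma^{-2}$, whereas Jeffreys gives $\sigma^{-1}$. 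No argument from these definitions can produce the exponent $1/2$.

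The gap is therefore Step 3. It does not prove the theorem; it changes the construction in two independent ways.

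\emph{First change.} You replace the fixed Euclidean exclusion set by KL-calibrated sets of half-width $\propto I(\theta)^{-1/2}$. To leading order this is the paper's geometry $A(\theta)=I(\theta)$. Under it $u$ is constant, and the rule $\pi\propto e^{u}-1$ gives a \emph{uniform} prior.

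\emph{Second change.} You treat $e^{u}-1$ as the mass of the whole neighbourhood and divide by its length $2h_\varepsilon(\theta)$. This contradicts the local reading you invoke rather than following from it: there $\pi_\delta(\theta)\,d\theta$ is the content of an infinitesimal region, so $e^{u}-1$ is already a density.

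That division is the only source of $I^{1/2}$, and the exponent is an input, not an output. With cells of half-width $s(\theta)$, the mass-per-length rule gives density $\propto I(\theta)s(\theta)$. Euclidean cells return $I$, KL-calibrated cells return $I^{1/2}$, and other choices of $s$ give other powers. Your Riemann-sum plan is the Brown--Walker grid approach mentioned in Section~\ref{sc_background}, and it inherits the same dependence on the grid.

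What you actually establish is (i) that the prior as defined tends to $I(\theta)$ up to a constant, and (ii) that equal-mass cells of equal KL size converge to Jeffreys' prior. To obtain the theorem you would have to adopt (ii) as the \emph{definition} of the loss-based prior. The paper's ``volume-based aggregation'' remark for the Corollary gestures at this, without derivation. Otherwise the correct conclusion is $\pi\propto I(\theta)$.
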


\begin{proof}
It follows from Lemma \ref{lm:onedim_trim},  ignoring constants independent of $\theta$, and using the approximation $\exp(u)-1\approx u$ for small $u$.
\end{proof}

Thus, in one dimension, the loss-based prior coincides with Jeffreys'
prior.

\subsection{Multidimensional parameter spaces}

Let $\theta\in\Theta\subset\mathbb{R}^d$, $d\ge2$. Define the exclusion
region
\begin{equation}
R_\delta^A(\theta)
=
\{\theta+h:\ h^\top A(\theta)h \le \delta^2\},
\label{eq:RdeltaA_trim}
\end{equation}
where $A(\theta)$ is symmetric positive definite.

\begin{theorem}\label{te:generalprior_trim}
Under Assumption \ref{as:regularityconditions},
\begin{equation}\label{eq:losstheo2}
u_\delta^A(\theta)
=
\tfrac{1}{2}\delta^2\lambda_{\min}
\bigl(A(\theta)^{-1/2}I(\theta)A(\theta)^{-1/2}\bigr)
+ o(\delta^2).
\end{equation}
Consequently,
\begin{equation}\label{eq:priortheo2}
\pi_\delta(\theta)
\propto
\lambda_{\min}
\bigl(A(\theta)^{-1/2}I(\theta)A(\theta)^{-1/2}\bigr).
\end{equation}
\end{theorem}

From the above Theorem \ref{te:generalprior_trim} (which proof is in the Appendix \ref{app:theoproof}), 
we note that the leading behaviour of the prior is governed by the smallest eigenvalue of the Fisher information
relative to the exclusion geometry. It therefore emphasises directions in
which the model is locally weakly identifiable.

\subsection{Jeffreys' prior as a special case}

When $A(\theta)=I(\theta)$, the exclusion regions are isotropic in the
Fisher geometry. In this case, the construction yields Jeffreys' prior.

\begin{corollary}
If $A(\theta)=I(\theta)$, then
\[
\pi(\theta)\propto \sqrt{\det I(\theta)}.
\]
\end{corollary}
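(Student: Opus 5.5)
The plan is to substitute $A(\theta)=I(\theta)$ into Theorem~\ref{te:generalprior_trim} and then convert the resulting loss into a density. A direct substitution cannot by itself give the stated result, and the argument has to say why.

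\textbf{Step 1: the loss.} With $A(\theta)=I(\theta)$ the relative matrix collapses:
\[
A(\theta)^{-1/2}I(\theta)A(\theta)^{-1/2}=I(\theta)^{-1/2}I(\theta)I(\theta)^{-1/2}=\mathrm{Id}_d,
\]
so $\lambda_{\min}=1$. By \eqref{eq:losstheo2},
\[
u_\delta^I(\theta)=\tfrac12\delta^2+o(\delta^2)
\]
uniformly in the sense that the leading term does not depend on $\theta$. Read literally as a density in $\theta$, \eqref{eq:priortheo2} would therefore give a flat prior, not $\sqrt{\det I(\theta)}$. This is the main obstacle: the corollary only follows once we use the interpretation from Section~\ref{sc_neighbourhoodexclusion}, namely that $\exp\{u_\delta(\theta)\}-1$ measures the self-information of the \emph{whole excluded region} $R_\delta(\theta)$, not of a point.

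\textbf{Step 2: mass versus density.} I would make this interpretation explicit. The loss-based matching assigns to the neighbourhood $R^I_\delta(\theta)$ a probability content proportional to
\[
\exp\{u^I_\delta(\theta)\}-1=\tfrac12\delta^2+o(\delta^2).
\]
The density $\pi(\theta)$ is then obtained by dividing this content by the Lebesgue volume of the region, in the limit $\delta\to0$. Since $R^A_\delta(\theta)$ is an ellipsoid,
\[
\mathrm{vol}\bigl(R^A_\delta(\theta)\bigr)=c_d\,\delta^d\,\det A(\theta)^{-1/2},
\]
where $c_d$ is the volume of the unit ball in $\mathbb{R}^d$.

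\textbf{Step 3: take the limit.} Setting $A=I$ gives
\[
\pi(\theta)\propto\lim_{\delta\to0}\frac{\tfrac12\delta^2+o(\delta^2)}{c_d\,\delta^d\,\det I(\theta)^{-1/2}}\cdot\frac{c_d\,\delta^{d}}{\tfrac12\delta^{2}}=\sqrt{\det I(\theta)}.
\]
Here I renormalise by the $\theta$-independent factor $\tfrac12\delta^2/(c_d\delta^d)$ and use $\exp(u)-1\approx u$ exactly as in Theorem~\ref{teo:onedim_trim}.

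\textbf{Step 4: consistency checks.} Two points need verifying.
\begin{itemize}
\item The $o(\delta^2)$ remainder must be negligible after renormalisation. This holds if the expansion \eqref{eq:quadraticexpansion_trim} is locally uniform in $\theta$, which follows from the continuity of $I(\theta)$ in Assumption~\ref{as:regularityconditions}.
\item For $d=1$ the result must reduce to Theorem~\ref{teo:onedim_trim}. It does: $I^{1/2}$ appears via the interval length $2\delta I^{-1/2}$.
\end{itemize}
The justification of the mass-to-density step is where I expect the argument to need the most care. I would present it as the precise meaning of the local interpretation of $\pi_\delta(\theta)\,d\theta$ stated after \eqref{eq:lossbasedprior_trim}.
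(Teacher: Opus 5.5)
Your Step~1 is right: with $A(\theta)=I(\theta)$ the matrix $A^{-1/2}IA^{-1/2}$ is the identity. So Theorem~\ref{te:generalprior_trim}, read as in Step~5 of its proof in Appendix~\ref{app:theoproof} (the density itself is $\propto\exp\{u_\delta\}-1$), gives a \emph{flat} prior. The paper's own justification is a single sentence saying that Fisher-isotropic exclusion ``corresponds to a volume-based aggregation of local curvature''. Your Step~2 is the concrete version of that idea, and under your rule the computation in Step~3 is correct.

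\textbf{The gap.} Step~2 does not follow from the framework; it is a new definition of the prior. You replace $\pi(\theta)\propto\exp\{u_\delta(\theta)\}-1$ by $\pi(\theta)\propto\lim_{\delta\to0}[\exp\{u_\delta(\theta)\}-1]/\mathrm{vol}(R_\delta(\theta))$. The sentence after \eqref{eq:lossbasedprior_trim} does not license this: there $d\theta$ is a Lebesgue element and $\pi_\delta$ itself is $\propto\exp\{u_\delta\}-1$. Nor can the rule be used only when $A=I$. Applied to a general $A$, it gives
\[
\pi(\theta)\propto\lambda_{\min}\bigl(A(\theta)^{-1/2}I(\theta)A(\theta)^{-1/2}\bigr)\sqrt{\det A(\theta)},
\]
which differs from \eqref{eq:priortheo2} whenever $\det A(\theta)$ depends on $\theta$. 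As written, then, you prove the corollary for a different construction from the one in Theorem~\ref{te:generalprior_trim}, and the corollary contradicts that theorem rather than following from it.

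To close the gap:
\begin{itemize}
\item state the volume normalisation explicitly as the definition of the prior density;
\item derive the amended general formula above;
\item obtain the corollary as its special case $A=I$.
\end{itemize}
This amended formula also picks up the factor $|\det J|^{-1}$ under the tensorial transformation of $A$, which is exactly what a density needs for the invariance claimed in Section~5.1.

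\textbf{The $d=1$ check.} Your check does not confirm Theorem~\ref{teo:onedim_trim}. That theorem uses $R_\delta(\theta)=\{\theta+h:|h|\le\delta\}$, whose length $2\delta$ does not depend on $\theta$. With Lemma~\ref{lm:onedim_trim}, your rule then gives $\pi\propto I(\theta)$, not $I(\theta)^{1/2}$. The paper's literal rule also gives $I(\theta)$, so Lemma~\ref{lm:onedim_trim} alone does not support the exponent $1/2$. You obtain $I^{1/2}$ only by switching to the Fisher interval $|h|\le\delta I(\theta)^{-1/2}$, which is the case $A=I$ with $d=1$, not Theorem~\ref{teo:onedim_trim}'s region.

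\textbf{The uniformity remark in Step~4.} It is unnecessary. The proportionality is pointwise in $\theta$, and for fixed $\theta$ the ratio $\{\tfrac12\delta^2+o(\delta^2)\}/\tfrac12\delta^2\to1$. In any case, continuity of $I(\theta)$ alone would not give uniform control of the remainder in \eqref{eq:quadraticexpansion_trim}.
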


\begin{proof}
This result reflects the fact that Fisher-isotropic exclusion corresponds to a
volume-based aggregation of local curvature, rather than a weakest-direction
criterion as in Theorem \ref{te:generalprior_trim}.
\end{proof}

Thus, Jeffreys' prior arises as the isotropic case of the loss-based
framework, while alternative choices of $A(\theta)$ lead to anisotropic
priors that emphasise directional fragility.

\section{Properties and interpretation of the loss-based construction}

The loss-based prior introduced in Section 4 depends on the choice of the exclusion
geometry $A(\theta)$. While this choice is not uniquely determined by the likelihood,
it can be guided by standard desiderata in objective Bayesian analysis. In this section
we discuss how the proposed framework accommodates these principles, and how different
choices of $A(\theta)$ lead to distinct but interpretable prior constructions. A more thorough and detailed illustration of the examples can be found in Appendix B.

\subsection{Invariance under reparametrisation}

A widespread desideratum is invariance under smooth one-to-one reparametrisations
$\phi = \phi(\theta)$. In the present framework, invariance is achieved by requiring
that $A(\theta)$ transforms as a $(0,2)$ tensor:
\[
A_{\phi}(\phi) = J(\theta)^{-T} A_{\theta}(\theta) J(\theta)^{-1},
\]
where $J(\theta)$ is the Jacobian of the transformation. Under this condition,
the matrix $A(\theta)^{-1/2} I(\theta) A(\theta)^{-1/2}$ transforms by similarity,
and therefore its eigenvalues, and hence the resulting prior, are invariant.

A canonical choice is $A(\theta) = I(\theta)$, corresponding to isotropic exclusion
in the Fisher geometry. This yields Jeffreys' prior. More generally, invariant
anisotropic geometries can be constructed as
\[
A(\theta) = I(\theta)^{1/2} B(\theta) I(\theta)^{1/2},
\]
where $B(\theta)$ is a positive definite matrix defined in Fisher units.

\subsection{Likelihood principle and independence of the realised data}

Objective priors are typically required to be independent of the realised data.
Within the loss-based framework, this property is ensured whenever $A(\theta)$
depends only on the model and not on the observed sample.

In regression settings, it is natural to allow dependence on the design matrix $X$,
which is part of the data-generating mechanism. For example, in logistic regression,
a choice such as $A(\beta) = X^\top X$ yields a prior independence of the realised data. By contrast,
choices involving estimated quantities, such as $A(\beta) = X^\top W(\hat{\beta}) X$,
introduce dependence on the observed data and therefore violate the likelihood principle.

\subsection{Group invariance}

In models admitting a group structure, it is often desirable for the prior to be
invariant under transformations preserving the statistical model. Within the present
framework, this requirement can be enforced by choosing $A(\theta)$ to transform
equivariantly under the group action, so that the exclusion regions are preserved.

Fisher-isotropic exclusion, $A(\theta) = I(\theta)$, automatically satisfies this
property and recovers Haar-type priors in classical settings. For example, in the
normal location--scale model, this leads to $\pi(\mu,\sigma) \propto 1/\sigma$.
More generally, anisotropic choices of $A(\theta)$ allow controlled departures from
group invariance while preserving the underlying geometric structure.

\subsection{Interest--nuisance structure}

In multiparameter problems, it is often necessary to distinguish between parameters
of interest and nuisance components. In the present framework, this asymmetry can be
encoded directly through the geometry of the exclusion region.

Let $\theta = (\psi,\lambda)$, where $\psi$ denotes the parameter of interest.
A block-structured choice of $A(\theta)$,
\[
A(\theta) =
\begin{pmatrix}
A_{\psi\psi}(\theta) & 0 \\
0 & A_{\lambda\lambda}(\theta)
\end{pmatrix},
\]
allows different directions of the parameter space to be weighted differently.
Directions associated with larger weights correspond to exclusions that incur
greater inferential loss, and are therefore more strongly protected.

This provides a geometric alternative to reference-prior constructions, in which
inferential priorities are encoded directly through the exclusion metric rather
than through an optimisation of missing information.

\subsection{Stability and weak identification}

Frequentist matching properties provide one important criterion for evaluating
objective priors, but they primarily reflect average behaviour under repeated
sampling. They do not directly address situations in which the likelihood is nearly
flat or the Fisher information becomes ill-conditioned.

The loss-based construction offers a complementary perspective by focusing on local
inferential stability. In particular, priors driven by the minimum eigenvalue of
$A(\theta)^{-1/2} I(\theta) A(\theta)^{-1/2}$ assign reduced weight to regions where
the model is weakly identified. This penalises directions of low curvature and
mitigates the impact of near-singular configurations on posterior inference.

For instance, in models where parameters become nearly non-identifiable, such as
correlation parameters approaching boundary values, the proposed priors tend to
downweight these regions relative to volume-based constructions. This reflects a
design principle based on protection against local inferential fragility rather
than global calibration.

\section{Discussion}

We have proposed a general formulation of loss-based priors for continuous parameter spaces,
extending the framework originally developed for discrete settings. The key idea is to replace
point exclusion with neighbourhood exclusion, thereby restoring a non-trivial notion of
inferential loss and linking prior construction to the local geometry induced by the
Kullback--Leibler divergence.

The resulting class of priors is indexed by the choice of exclusion geometry $A(\theta)$, which
plays a role analogous to that of a loss function in decision theory. This perspective provides
a unified framework in which several well-known objective priors arise as special cases. In
particular, Fisher-isotropic exclusion recovers Jeffreys' prior, while more general choices of
$A(\theta)$ lead to anisotropic constructions that reflect directional features of the model.

A central aspect of the proposed approach is that it separates the modelling of inferential loss
from the likelihood itself. This allows standard desiderata of objective Bayesian analysis—such
as invariance, data-independence, and group symmetry—to be incorporated through the choice of
$A(\theta)$, while also enabling alternative principles, such as robustness to weak
identification, to be expressed in a transparent and interpretable way.

The normal model provides a useful benchmark for interpreting the behaviour of the proposed
priors. In this setting, the likelihood is regular, the parameters are orthogonal, and the
Fisher geometry is stable. As a result, Jeffreys' and reference priors already exhibit strong
frequentist and invariance properties, and are often regarded as canonical choices. It is
therefore not surprising that anisotropic, fragility-driven priors do not improve standard
frequentist criteria such as mean squared error or credible set coverage in this case. The
normal model should thus be viewed primarily as an illustration of the framework, rather than
as a setting in which performance gains are expected.

More generally, the proposed construction is motivated by a different notion of inferential
risk. While classical criteria such as frequentist matching focus on average behaviour under
repeated sampling, they do not directly address situations in which the likelihood becomes
nearly flat or the Fisher information is ill-conditioned. In such cases, inference may be
sensitive to small perturbations of the parameter, leading to weak identification or numerical
instability. The loss-based framework addresses this issue by assigning lower weight to regions
where the model is locally fragile, as captured by the weakest directions of the Fisher
information relative to the exclusion geometry.

From this perspective, anisotropic loss-based priors should be interpreted as stability-oriented
rather than optimal under standard frequentist loss functions. Their role is not to dominate
Jeffreys' or reference priors in well-behaved models, but to provide protection in settings
where inferential geometry is highly anisotropic or near-singular. In such regimes, volume-based
aggregation can mask weak directions, whereas fragility-based constructions explicitly respond
to them.

The framework does not prescribe a unique prior, but rather defines a principled class of
candidates. The choice of exclusion geometry may therefore be guided by the inferential goals of
the analysis, reflecting a balance between invariance, interpretability, and stability. In this
sense, the proposed construction should be viewed as complementary to existing objective
Bayesian methods, rather than as a replacement.

Several directions for future work remain. These include the development of principled criteria
for selecting $A(\theta)$ in specific model classes, a systematic comparison with reference and
matching priors in weakly identified settings, and the extension of the framework to models
with singularities or boundary effects. More broadly, the connection between loss-based priors
and information geometry suggests further links with geometric and decision-theoretic
approaches to Bayesian inference.

\bibliography{references}

\appendix
\appendix
\section{Proof of Theorem \ref{te:generalprior_trim}}\label{app:theoproof}

Fix $\theta\in\Theta$ and write
\[
A=A(\theta), \qquad I=I(\theta),
\]
for brevity. Since $\Theta$ is open, there exists $\varepsilon_0>0$ such that
$\theta+h\in\Theta$ whenever $\|h\|<\varepsilon_0$.

Because $A$ is symmetric positive definite, there exist constants
$0<m_A\le M_A<\infty$ such that
\begin{equation}
m_A\|h\|^2 \le h^\top A h \le M_A\|h\|^2,
\qquad h\in\mathbb{R}^d.
\label{eq:A_bounds_app}
\end{equation}
Hence, if $h^\top A h=\delta^2$, then
\[
\|h\| \le \frac{\delta}{\sqrt{m_A}} \to 0
\qquad \text{as } \delta\to 0.
\]

Let
\[
g(h)
:=
D_{KL}\!\left(f(\cdot\mid\theta)\,\|\,f(\cdot\mid\theta+h)\right).
\]
By the quadratic expansion \eqref{eq:quadraticexpansion_trim},
\[
g(h)=\frac12 h^\top I h + r_\theta(h),
\qquad r_\theta(h)=o(\|h\|^2),
\qquad h\to 0.
\]

We wish to evaluate
\[
u_\delta^A(\theta)
=
\inf_{h^\top A h\ge \delta^2} g(h).
\]

\medskip

\noindent\textbf{Step 1: reduction to the boundary.}
Since $I$ is positive definite, the quadratic term $h^\top I h$ is strictly positive for
$h\neq 0$ and homogeneous of degree two. Therefore, for $\delta$ sufficiently small, the
infimum of $g(h)$ over the set $\{h:\,h^\top A h\ge \delta^2\}$ is attained asymptotically
at the smallest admissible perturbations, namely on the boundary
\[
\mathcal S_\delta:=\{h\in\mathbb{R}^d:\ h^\top A h=\delta^2\}.
\]
Thus,
\begin{equation}
u_\delta^A(\theta)
=
\inf_{h\in\mathcal S_\delta} g(h) + o(\delta^2).
\label{eq:boundary_reduction_app}
\end{equation}

\medskip

\noindent\textbf{Step 2: uniform control of the remainder.}
For $h\in\mathcal S_\delta$, \eqref{eq:A_bounds_app} gives
\[
\|h\| = O(\delta)
\qquad \text{uniformly on } \mathcal S_\delta.
\]
Hence
\[
\sup_{h\in\mathcal S_\delta}|r_\theta(h)| = o(\delta^2).
\]
It follows that
\begin{equation}
\inf_{h\in\mathcal S_\delta} g(h)
=
\inf_{h\in\mathcal S_\delta}\frac12 h^\top I h + o(\delta^2).
\label{eq:remainder_split_app}
\end{equation}

\medskip

\noindent\textbf{Step 3: quadratic minimisation.}
We now minimise $h^\top I h$ subject to $h^\top A h=\delta^2$.
Set
\[
v=A^{1/2}h,
\qquad \text{so that} \qquad
h=A^{-1/2}v.
\]
Then the constraint becomes
\[
\|v\|^2=\delta^2,
\]
and
\[
h^\top I h
=
v^\top\!\left(A^{-1/2} I A^{-1/2}\right)v.
\]
Therefore
\[
\inf_{h\in\mathcal S_\delta} h^\top I h
=
\inf_{\|v\|=\delta} v^\top\!\left(A^{-1/2} I A^{-1/2}\right)v.
\]
By the Rayleigh--Ritz theorem,
\[
\inf_{\|v\|=\delta} v^\top\!\left(A^{-1/2} I A^{-1/2}\right)v
=
\delta^2 \lambda_{\min}\!\left(A^{-1/2} I A^{-1/2}\right).
\]
Hence
\begin{equation}
\inf_{h\in\mathcal S_\delta}\frac12 h^\top I h
=
\frac12 \delta^2
\lambda_{\min}\!\left(A^{-1/2} I A^{-1/2}\right).
\label{eq:quadratic_min_app}
\end{equation}

\medskip

\noindent\textbf{Step 4: asymptotic form of the $\delta$-worth.}
Combining \eqref{eq:boundary_reduction_app}, \eqref{eq:remainder_split_app},
and \eqref{eq:quadratic_min_app}, we obtain
\[
u_\delta^A(\theta)
=
\frac12 \delta^2
\lambda_{\min}\!\left(A^{-1/2} I A^{-1/2}\right)
+ o(\delta^2),
\]
which proves \eqref{eq:losstheo2}.

\medskip

\noindent\textbf{Step 5: induced prior.}
From the loss-based construction,
\[
\pi_\delta(\theta)\propto \exp\{u_\delta^A(\theta)\}-1.
\]
Since $u_\delta^A(\theta)=O(\delta^2)$, we have
\[
\exp\{u_\delta^A(\theta)\}-1
=
u_\delta^A(\theta)\{1+o(1)\}.
\]
Thus, up to multiplicative constants independent of $\theta$,
\[
\pi_\delta(\theta)
\propto
\lambda_{\min}\!\left(A(\theta)^{-1/2} I(\theta) A(\theta)^{-1/2}\right),
\]
which proves \eqref{eq:priortheo2}.
\qed

\section{Appendix B: Illustrative examples for the desiderata}\label{app:illustrations}

In this appendix we provide simple examples illustrating how different choices
of the exclusion geometry $A(\theta)$ support the desiderata discussed in Section~5. Each subsection corresponds directly to the desiderata introduced in Section 5, using simple models to illustrate the role of the exclusion geometry.

\subsection{D1: Invariance under reparametrisation}

Consider the normal model $X \sim \mathcal{N}(\mu,\sigma^2)$ and the
reparametrisation $\tau = \sigma^{-2}$.

The Fisher information matrices are
\[
I(\mu,\sigma^2) =
\begin{pmatrix}
\sigma^{-2} & 0 \\
0 & (2\sigma^4)^{-1}
\end{pmatrix}, \qquad
I(\mu,\tau) =
\begin{pmatrix}
\tau & 0 \\
0 & (2\tau^2)^{-1}
\end{pmatrix}.
\]

Choosing Fisher-isotropic exclusion $A(\theta)=I(\theta)$ yields
\[
\pi(\mu,\sigma^2) \propto \sigma^{-3/2}, \qquad
\pi(\mu,\tau) \propto \tau^{-1/2}.
\]

These are consistent under the change of variables, confirming invariance.

\medskip

\noindent\textit{Interpretation.}
Tensorial transformation of $A(\theta)$ ensures that the induced prior is invariant.

\subsection{D2: Likelihood principle and data-independence}

Consider logistic regression
\[
Y_i \sim \text{Bernoulli}(p_i), \qquad \logit(p_i) = x_i^\top \beta.
\]

The Fisher information is
\[
I(\beta) = X^\top W(\beta) X.
\]

Two choices of exclusion geometry:

\begin{itemize}
\item Model-based:
\[
A(\beta) = X^\top X,
\]
which depends only on the design matrix.

\item Data-dependent:
\[
A(\beta) = X^\top W(\hat\beta) X,
\]
which depends on the observed data.
\end{itemize}

The corresponding priors are
\[
\pi(\beta) \propto \lambda_{\min}\!\left(A^{-1/2} I(\beta) A^{-1/2}\right).
\]

\medskip

\noindent\textit{Interpretation.}
Only the first choice respects the likelihood principle, since it does not depend on the realised data.

\subsection{D3: Group invariance}

Consider the normal location–scale model
\[
X \sim \mathcal{N}(\mu,\sigma^2),
\]
which is invariant under
\[
x \mapsto ax + b, \qquad \mu \mapsto a\mu + b, \qquad \sigma \mapsto a\sigma.
\]

Choosing
\[
A(\mu,\sigma) = I(\mu,\sigma),
\]
yields
\[
\pi(\mu,\sigma) \propto \frac{1}{\sigma},
\]
which coincides with the Haar prior.

\medskip

\noindent\textit{Interpretation.}
Fisher-isotropic exclusion automatically preserves group symmetries.

\subsection{D4: Interest--nuisance structure}

Let
\[
X_i \sim \mathcal{N}(\mu,\sigma^2), \qquad \theta = (\mu,\sigma^2),
\]
and suppose $\mu$ is of interest and $\sigma^2$ is nuisance.

Consider a block geometry
\[
A(\theta) =
\begin{pmatrix}
1 & 0 \\
0 & c
\end{pmatrix}, \qquad c \gg 1.
\]

Then
\[
\pi(\theta) \propto \lambda_{\min}\!\left(A^{-1/2} I(\theta) A^{-1/2}\right)
= \min\{\sigma^{-2},\, c^{-1}(2\sigma^4)^{-1}\}.
\]

\medskip

\noindent\textit{Interpretation.}
Large $c$ increases protection in the nuisance direction, mimicking the behaviour of reference priors.

\subsection{D5: Stability and weak identification}

Consider a bivariate normal model with correlation $\rho$:
\[
(X_1,X_2) \sim \mathcal{N}(0,\Sigma), \qquad
\Sigma =
\begin{pmatrix}
1 & \rho \\
\rho & 1
\end{pmatrix}.
\]

The Fisher information for $\rho$ behaves as
\[
I(\rho) \propto \frac{1}{(1-\rho^2)^2},
\]
which becomes ill-conditioned as $|\rho|\to 1$.

Choosing Euclidean exclusion $A=1$ gives
\[
\pi(\rho) \propto I(\rho),
\]
which diverges at the boundary.

By contrast, a fragility-oriented construction yields
\[
\pi(\rho) \propto \lambda_{\min}(I(\rho)),
\]
which downweights regions where curvature becomes unstable.  In one dimension, the minimum-eigenvalue construction coincides with the Fisher
information itself. However, the interpretation remains that the prior reflects
local curvature and penalises regions where inferential stability deteriorates.

\medskip

\noindent\textit{Interpretation.}
The loss-based prior penalises weakly identified regions, improving stability relative to volume-based priors.

\end{document}